\newtheorem{theorem}{Theorem}
\newtheorem{lemma}{Lemma}
\newtheorem*{proof}{Proof}
\newcommand{\comment}[1]{{}}
\newcommand{\Tr}{\mathrm{Tr}}
\begin{document}

\preprint{APS/123-QED}

\title{Equivalence checking of quantum circuits by nonlocality}

 \author{Weixiao Sun$^{1}$}
 \author{Zhaohui Wei$^{2,3,}$}\email{weizhaohui@gmail.com}
 \affiliation{$^{1}$Institute for Interdisciplinary Information Sciences, Tsinghua University, Beijing 100084, China\\$^{2}$Yau Mathematical Sciences Center, Tsinghua University, Beijing 100084, People's Republic of China\\$^{3}$Yanqi Lake Beijing Institute of Mathematical Sciences and Applications, 101407, People's Republic of China}

\begin{abstract}
Suppose two quantum circuit chips are located at different places, for which we do not have any prior knowledge, and cannot see the internal structures either. If we want to find out whether they have the same functions or not with certainty, what should we do? In this paper, we show that this realistic problem can be solved completely from the viewpoints of quantum nonlocality. Specifically, we design an elegant protocol that examines underlying quantum nonlocality, where the strongest nonlocality can be observed if and only if two quantum circuits are equivalent to each other. We show that the protocol also works approximately, where the distance between two quantum circuits can be calculated accurately by observed quantum nonlocality in an analytical manner. Furthermore, it turns out that the computational cost of our protocol is independent in the size of compared quantum circuits. Lastly, we also discuss the possibility to generalize the protocol to multipartite cases, i.e., if we do equivalence checking for multiple quantum circuits, we try to solve the problem in one go.
\end{abstract}

\maketitle

\section{Introduction}

In the past several years, the physical realizations of quantum computing have achieved remarkable progresses~\cite{arute2019quantum,zhong2020quantum}. As a result, the following three tasks have become more and more important issues in quantum computing. First, to run a quantum algorithm, which is usually designed in the language of quantum circuit, on a quantum computer, we have to compile it into a series of quantum instructions that can be executed directly on the quantum hardware, and as a whole this is essentially another quantum circuit. Second, when executing quantum instructions on a quantum computer, the hardware configuration has to be respected, which means that the available quantum instructions are actually restricted. If this is not the case, we have to map the quantum circuit at hand into another desirable one. Third, for now the scaling of quantum computing is still small, and quantum computational resources are very precious, therefore it is always nice to make sure that the executed quantum circuit has been optimized. Fourth, quantum computing has been physically implemented on different quantum platforms, then if we run a same quantum algorithm on different platforms, an important problem is to make sure they are essentially the same, where the quantum circuits may look different.

It is not hard to see that a common part in the above four fundamental problems is that we need to transfer a quantum circuit into another, or compare two quantum circuits. Undoubtedly, during these transformations or comparisons, a basic requirement is to find out whether an initial quantum circuit and the compiled, optimized, or compared quantum circuit have exactly the same functions. As a consequence, equivalent checking of quantum circuits is a profound problem in quantum computing and quantum engineering. We stress that sometimes the compared two quantum circuits are located at different places.

In fact, this problem has attracted a lot of attentions, and quite a few approaches have been proposed accordingly. Particularly, in \cite{viamontes2007checking} an approach based on decision diagrams was proposed for equivalence checking of quantum circuits, where the central idea is representing quantum circuits as decision programs, on which the comparisons are performed. In \cite{yamashita2010fast}, a concept called reversible miter was proposed for this problem, which is a generalization of miter circuits utilized in digital electronic circuits, and can be integrated with circuit simplifications and decision programs techniques. Meanwhile, as mentioned above, equivalence checking of quantum circuits have been extensively studied in the optimization of quantum circuits and the verifications of quantum compilers~\cite{amy2014polynomial,nam2018automated,kissinger2019reducing,smith2019quantum,shi2019contract,hietala2021verified}. Very recently, equivalence checking has also been introduced to handle sequential quantum circuits, where a Mealy machine-based framework was proposed~\cite{wang2018equivalence}.

Despite these encouraging approaches for equivalence checking of quantum circuits, however, they share the common feature that internal structures of involved quantum circuits can be seen. If we use the language of software testing, this is essentially a kind of white-box testing. Then like in software testing, black-box testing that the internal structures of quantum circuits cannot be seen should also be an realistic scenario that needs to be considered.

Indeed, as mentioned in future it will be an important problem for us to find out whether two \emph{separated} manufactured quantum circuits chips that the insides cannot be seen have the same functions \emph{with certainty}. Trying to solve this problem is the main target of the current paper. We stress that in our setting we do not have any prior knowledge on quantum circuits to be compared, and this is essentially different from the topic of unitary operation discrimination~\cite{acin2001statistical,d2001using,duan2007entanglement}, where every unitary operation is picked up from a small set known beforehand.

In this paper, based on the key role played by quantum nonlocality, we design an elegant approach that can achieve black-box equivalence checking of quantum circuits with certainty. Clearly, no similar approach exists for the classical counterpart of this problem. Particularly, we provide a complete mathematical characterization for our approach. First, we prove that in our protocol, the observed quantum nonlocality is the strongest if and only if the two involved quantum circuits have exactly the same functions. Second, we show that the protocol also works well in an approximate sense, i.e., for a given strength of observed quantum nonlocality, we provide \emph{analytical} lower and upper bounds for the distance between the two quantum circuits. By providing numerical evidences, we verify the correctness of these bounds. Third, by looking into the structure of the gap between the above two bounds, we proposed a modified protocol such that the gap disappears, which means that based on the observed nonlocality we can completely pin down the distance between the compared quantum circuits generally. Fourth, we analyze the computational cost of the modified protocol, and show that it is independent of the size of compared quantum circuits. That is, for a given precision we need only a \emph{constant} cost to check equivalence of large quantum circuits. Lastly, we discuss the possibility to generalize our protocol to the case of multiple quantum circuits, where we want to determine whether three or even more quantum circuits are equivalent to each other in one go. We argue that at least when the number of quantum circuits is odd, this is impossible. We believe that our results demonstrate a new possibility to apply quantum nonlocality to important problems in future quantum engineering.

\section{The exact equivalence checking of two quantum circuits}

Suppose two $n$-qubit quantum circuits $C_1$ and $C_2$ are held by two separated players, Alice and Bob, respectively. Since the Hadamard gate and the Toffoli gate form a universal gate set for quantum computation~\cite{shi2003both}, without loss of generality we suppose that the matrix representations of $C_1$ and $C_2$ are \emph{real}, denoted $U_1$ and $U_2$. Then our task is to determine whether $U_1$ is equivalent to $U_2$ up to a global phase (since they are real, a global phase can only be $\pm 1$). Let us first consider the smallest case where $C_1$ and $C_2$ are single-qubit quantum circuits.

Before introducing our main idea, let us recall some facts on quantum nonlocality and Bell experiments. Suppose Alice and Bob share a lot of EPR paris, i.e., $\ket{\text{EPR}}=\frac{1}{\sqrt{2}}(\ket{00}+\ket{11})$. On each EPR pair, they repeat the following procedure. Both of them perform random local measurements on their qubits respectively, where Alice measures observables $A_0=\sigma_X$ and $A_1=\sigma_Z$, and Bob measures observables $B_0=(\sigma_X+\sigma_Z)/\sqrt{2}$ and $B_1=(\sigma_X-\sigma_Z)/\sqrt{2}$. Here $\sigma_X$ and $\sigma_Z$ are Pauli matrices. Then they calculate all the probability distribution $p(ab|xy)$, i.e., the probability that Alice and Bob obtain outcomes $a$ on $A_x$ and $b$ on $B_y$ respectively, where $a,b\in\{-1,1\}$ and $x,y\in\{0,1\}$. Let $\langle A_xB_y\rangle=\sum_{a,b}ab\cdot p(ab|xy)$, and
\begin{equation}
I_{\text{CHSH}}=\langle A_0B_0\rangle+\langle A_1B_0\rangle+\langle A_0B_1\rangle-\langle A_1B_1\rangle,
\end{equation}
then it holds that $I_{\text{CHSH}}=2\sqrt{2}$. As a comparison, if $p(ab|xy)$ is produced by a classical system, the corresponding value will not be larger than $2$, and this is the famous Clauser-Horne-Shimony-Holt (CHSH) inequality~\cite{clauser1969proposed}. A well-known fact is that the above violation to the CHSH inequality achieved by EPR pairs is optimal~\cite{clauser1969proposed}, which is the foundation of many quantum information processing tasks~\cite{popescu1992states,mayers1998quantum,ekert1991quantum}.

We now change the above Bell experiment a little bit by adding one more step. Before measuring each EPR pair, Alice and Bob input the qubit they hold into $C_1$ and $C_2$ respectively, then the overall output will be $\ket{\psi}=\frac{1}{\sqrt{2}}(U_1\ket{0}\otimes U_2\ket{0}+U_1\ket{1}\otimes U_2\ket{1})$, on which they perform the \emph{same} sets of local measurements as above. Here we stress that it is crucial to use the same sets of local measurements. We now analyze the new value of $I_{\text{CHSH}}$, denoted $I'_{\text{CHSH}}$.

We first consider the case that $U_1=U_2$. Recall that they are real unitary matrices, then it can be verified that $\ket{\psi}=\ket{\text{EPR}}$, which means $I'_{\text{CHSH}}=2\sqrt{2}$. That is to say, if $C_1$ and $C_2$ are the same, the above experiment will still result in a maximal violation. In this situation, it is natural to ask, is the converse correct? That is, does $I'_{\text{CHSH}}=2\sqrt{2}$ always imply that $U_1=U_2$? If this is correct, then we can perfectly determine whether $C_1$ and $C_2$ are equivalent by performing the above modified Bell experiment.

Actually, this is indeed the case. It has been known that if $I'_{\text{CHSH}}=2\sqrt{2}$, the following conditions are satisfied~\cite{popescu1992states}.
\begin{equation}
\frac{A_0\pm A_1}{\sqrt{2}}\ket{\psi}=B_{0/1}\ket{\psi}.
\end{equation}
By straightforward calculations, it can be verified that this indicates that $\ket{\psi}=\ket{\text{EPR}}$ up to a global phase. On the other hand, if $U_1\neq\pm U_2$, it can be checked that $\ket{\psi}\neq\pm \ket{\text{EPR}}$, which means that if $I'_{\text{CHSH}}=2\sqrt{2}$, we must have $U_1=U_2$.

We now move to the general case, where the common size of $C_1$ and $C_2$ is $n$ qubits. Let $d=2^n$. Inspired by the single-qubit case, Alice and Bob hope they can use a similar protocol to find out whether $C_1$ and $C_2$ are equivalent. That is, they hope that the following plan could be realized. Again, they first prepare and share many copies of the maximally entangled state
\begin{equation}
\ket{\Phi_d}=\frac{1}{\sqrt{d}}\sum_{i=0}^{d-1}\ket{ii},
\end{equation}
and choose a certain Bell inequality such that $\ket{\Phi_d}$ violates it maximally, where they record the local measurements that achieve the maximal violation. Then for each copy of $\ket{\Phi_d}$, Alice and Bob input their own subsystems into the corresponding quantum circuits they hold respectively. On the output state, which is now $(U_1\otimes U_2)\ket{\Phi_d}$, they perform the same local measurements as recorded above. By repeating the experiments, they collect the measurement outcome statistics data $p(ab|xy)$, where $x,y\in\{1,2,...,m\}$ and $a,b\in\{0,1,...,d-1\}$ are the labels for the local measurements and the corresponding outcomes. Then they examine the measurement outcome statistics data with the above chosen Bell inequality, and hope that $(U_1\otimes U_2)\ket{\Phi_d}$ violates the Bell inequality maximally if and only if $U_1=U_2$ up to a global phase.

Clearly, if the above Bell equality exists, like in the qubit case, Alice and Bob can determine whether $C_1$ and $C_2$ are equivalent perfectly according to the violation. Again, a key question is, can we find such a Bell inequality when $n>1$? Interestingly, it turns out that the answer is positive.

According to our plan, such a desirable Bell inequality should be violated maximally by maximally entangled states. However, it has been well-known that entanglement is a different resource from quantum nonlocality, and on many Bell inequalities it is not maximally entangled states that achieve the maximal violations, say the Collins-Gisin-Linden-Masser-Popescu (CGLMP) inequalities~\cite{collins2002bell}. In the meantime, quantum nonlocality can be observed directly by quantum experiments, while entanglement cannot, thus we often choose to characterize unknown entanglement by looking into the underlying quantum nonlocality. Therefore, when doing this, we hope that quantum nonlocality we observed and the underlying entanglement are as consistent as possible, which implies that the above desirable Bell inequalities will be nice choices. Fortunately, in \cite{salavrakos2017bell} such a class of beautiful Bell inequalities have been proposed, which were deliberately designed to be violated maximally by $\ket{\Phi_d}$.

Specifically, to perform the measurement labelled by $x$, Alice measures an observable with eigenvectors $\ket{a}_x$ ($a=0,1,...,d-1$, and $x=1,2,...,m$), and
\begin{equation}\label{AliceEigenvector}
\ket{a}_x=\frac{1}{\sqrt{d}}\sum_{k=0}^{d-1}\text{exp}\left[\frac{2\pi \mathbf i}{d}k(a-\alpha_x)\right]\ket{k},
\end{equation}
where ${\mathbf i}=\sqrt{-1}$ is the imaginary number, and $\alpha_x=(x-1/2)/m$. Similarly, to perform the measurement labelled by $y$, Bob measures an observable with eigenvectors $\ket{b}_y$ ($b=0,1,...,d-1$, and $y=1,2,...,m$), and
\begin{equation}\label{BobEigenvector}
\ket{b}_y=\frac{1}{\sqrt{d}}\sum_{k=0}^{d-1}\text{exp}\left[-\frac{2\pi \mathbf i}{d}k(b-\beta_y)\right]\ket{k},
\end{equation}
where $\beta_y=y/m$. On an arbitrary quantum state $\ket{\phi}$, the Bell expression is essentially equivalent to
\begin{equation}\label{BellExpression}
I_{d,m}(\ket{\phi})=\sum_{i=1}^m\sum_{l=1}^{d-1}\bra{\phi}(A_i^l\otimes \bar{B}_i^l)\ket{\phi},
\end{equation}
where $A_i^l=\sum_{a=0}^{d-1}\omega^{al}\ket{a}_{ii}\bra{a}$, $\bar{B}_i^l=(A_i^l)^*$, and $\omega=\text{exp}(2\pi \mathbf i/d)$. Note that $A_i^l$ and $\bar{B}_i^l$ are unitary matrices.

In \cite{salavrakos2017bell}, it was proved that the Tsirelson bound of $I_{d,m}$ is $m(d-1)$, which is achieved exactly by $\ket{\Phi_d}$ and strictly larger than the classical bound. Indeed, a property of $\ket{\Phi_d}$ is that for any $d\times d$ matrices $M$ and $N$, it holds that $(M\otimes N)\ket{\Phi_d}=(I\otimes NM^T\ket{\Phi_d})$. Since $\bar{B}_i^l=(A_i^l)^*$ for any $i$ and $l$, we have that $\bra{\Phi_d}(A_i^l\otimes \bar{B}_i^l)\ket{\Phi_d}=\bra{\Phi_d}(I\otimes I)\ket{\Phi_d}=1$, implying that $I_{d,m}=m(d-1)$ on this state.

Let us go back to our task. We first notice that if $C_1$ and $C_2$ are the same, i.e., $U_1=U_2=U$, $(U_1\otimes U_2)\ket{\Phi_d}$ always achieves the Tsirelson bound of $I_{d,m}$. In fact, for any $i$ and $l$ it holds that
\begin{align*}
	& \bra{\Phi_d}(U^T\otimes U^T)(A_i^l\otimes \bar{B}_i^l)(U\otimes U)\ket{\Phi_d}\\
       =&\bra{\Phi_d}(I\otimes U^T\bar{B}_i^lUU^T(A_i^l)^TU)\ket{\Phi_d} \\
	   =& \bra{\Phi_d}(I\otimes I)\ket{\Phi_d}\\
       =& 1.
\end{align*}
Hence, the new value of $I_{d,m}$ is still $m(d-1)$. In this situation, similar to the case of single-qubit quantum circuits, we need to consider whether the converse is correct or not. Or, can we have $U_1\neq U_2$ but $I_{d,m}((U_1\otimes U_2)\ket{\Phi_d})=m(d-1)$? We now show that this is impossible.
\begin{theorem}
$I_{d,m}((U_1\otimes U_2)\ket{\Phi_d})=m(d-1)$ if and only if $U_1=U_2$ up to a global phase.
\end{theorem}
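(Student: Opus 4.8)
The plan is to turn the statement ``$I_{d,m}$ attains its Tsirelson bound'' into a family of \emph{local} algebraic identities and then show that these force $U_1U_2^T$ to be a scalar. The ``if'' direction is essentially settled in the excerpt: $U_1=U_2$ already yields the value $m(d-1)$, and since $I_{d,m}$ is the Hermitian quadratic form $\bra\phi(\cdot)\ket\phi$, the only other admissible real phase, $U_1\to-U_1$, sends $\ket\psi\to-\ket\psi$ and leaves the value unchanged. So I would focus on ``only if''.

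First I would observe that each summand $z_{i,l}=\bra\psi(A_i^l\otimes\bar B_i^l)\ket\psi$ obeys $|z_{i,l}|\le1$, because $A_i^l\otimes\bar B_i^l$ is unitary and $\ket\psi$ is a unit vector. There are exactly $m(d-1)$ summands and their sum is the \emph{real} number $m(d-1)$, so the triangle inequality is saturated and every $z_{i,l}=1$. For a unitary $W$ and a unit vector $\ket\psi$, $\bra\psi W\ket\psi=1$ forces $W\ket\psi=\ket\psi$; hence attaining the bound is \emph{equivalent} to
\begin{equation}
(A_i^l\otimes\bar B_i^l)\ket\psi=\ket\psi,\qquad i=1,\dots,m,\quad l=1,\dots,d-1.
\end{equation}

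Next I would substitute $\ket\psi=(U_1\otimes U_2)\ket{\Phi_d}$ and reduce to a commutation condition. Setting $V=U_1U_2^T$, a real orthogonal matrix, and using the transpose identity $(M\otimes N)\ket{\Phi_d}=(MN^T\otimes I)\ket{\Phi_d}$, one rewrites $\ket\psi=(V\otimes I)\ket{\Phi_d}$. Plugging in $\bar B_i^l=(A_i^l)^*$ and ricocheting Bob's factor back onto Alice's side, the eigenvalue conditions collapse to $V^{-1}A_i^lV(A_i^l)^{-1}=I$; that is, $V$ commutes with every $A_i^l$. So the theorem reduces to the purely linear-algebraic assertion: the only real orthogonal $V$ commuting with all $A_i^l$ is $\pm I$, which gives $U_1=\pm U_2$.

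The crux is this commutant computation. Since $A_x^1=\sum_a\omega^a\ket{a}_{xx}\bra{a}$ has $d$ \emph{distinct} eigenvalues $\{\omega^a\}$, any $V$ commuting with it must be diagonal in the eigenbasis $\{\ket{a}_x\}$, and this must hold for every setting $x$ simultaneously. I would then pick two settings, say $x=1,2$, and compare them: $V\ket{a}_1=\lambda_a\ket{a}_1$ together with $V\ket{b}_2=\mu_b\ket{b}_2$ yields $(\lambda_a-\mu_b)\,{}_1\langle a|b\rangle_2=0$ for all $a,b$. The decisive calculation is that the overlaps ${}_1\langle a|b\rangle_2$ form a geometric sum whose numerator $e^{2\pi\mathbf i(\alpha_1-\alpha_2)}-1$ is nonzero for \emph{every} pair $(a,b)$, precisely because $\alpha_1-\alpha_2=-1/m$ is not an integer. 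All overlaps being nonzero forces $\lambda_a=\mu_b$ for all $a,b$, so $V$ is scalar, hence $\pm I$ by orthogonality. I expect the only delicate points to be the transpose/conjugate bookkeeping in the ricochet step and the observation that the argument genuinely needs $m\ge2$ (a single setting cannot pin $V$ down, and the statement indeed fails for $m=1$).
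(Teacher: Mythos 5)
Your proposal is correct and follows essentially the same route as the paper's proof: saturate each summand, deduce that $U_1U_2^T$ commutes with the setting-$1$ operators $A_i^1$ for two distinct settings, and then use the fact that the cross-setting overlaps ${}_i\langle a|b\rangle_j$ are all nonzero (a geometric sum whose numerator $e^{2\pi\mathbf i(\alpha_i-\alpha_j)}-1$ does not vanish) to force $U_1U_2^T$ to be scalar. The only cosmetic differences are that the paper obtains the commutation relation via $\Tr(W)=d\Rightarrow W=I$ for unitary $W$ rather than via the eigenvector condition $\bra\psi W\ket\psi=1\Rightarrow W\ket\psi=\ket\psi$, and phrases the final step as a convex combination of unit complex numbers rather than as $(\lambda_a-\mu_b)\,{}_i\langle a|b\rangle_j=0$; your explicit remark that $m\ge2$ is needed is a point the paper leaves implicit.
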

\begin{proof}
We only need to prove that $I_{d,m}((U_1\otimes U_2)\ket{\Phi_d})=m(d-1)$ implies $U_1=U_2$. According to the definition of $I_{d,m}$, we know that if $I_{d,m}((U_1\otimes U_2)\ket{\Phi_d})=m(d-1)$, each term in the summation of Eq.\eqref{BellExpression} will be $1$. Therefore, for any $i\in\{1,2,...,m\}$ it holds that (let $l=1$)
\begin{align*}
	&\bra{\Phi_d}(U_1^T\otimes U_2^T)(A_i^1\otimes \bar{B}_i^1)(U_1\otimes U_2)\ket{\Phi_d} \\
      = & \bra{\Phi_d}(I\otimes U_2^T\bar{B}_i^1U_2U_1^T(A_i^1)^TU_1)\ket{\Phi_d}\\
       =&\frac{1}{d}\Tr(U_2^T\bar{B}_i^1U_2U_1^T(A_i^1)^TU_1)\\
	   =&\frac{1}{d}\Tr(U_1U_2^T\bar{B}_i^1U_2U_1^T(A_i^1)^T)\\
       =& 1,
\end{align*}
where we have utilized the fact that for any $d\times d$ matrices $M$ and $N$, it holds that $\bra{\Phi_d}(I\otimes M)\ket{\Phi_d}=\Tr(M)/d$ and $\Tr(MN)=\Tr(NM)$. Hence, we obtain that $\Tr(U_1U_2^T\bar{B}_i^1U_2U_1^T(A_i^1)^T)=d$.

Meanwhile, note that $U_1U_2^T\bar{B}_i^1U_2U_1^T(A_i^1)^T$ is a $d\times d$ unitary matrix, thus we have that $U_1U_2^T\bar{B}_i^1U_2U_1^T(A_i^1)^T=I$.
For simplicity, let $S_1=U_2U_1^T$ and $S_2=\bar{B}_i^1$. Then this means $S_1^{\dag}S_2S_1S_2^{\dag}=I$, which is also $S_2S_1=S_1S_2$, where we have utilized the fact that both $S_1$ and $S_2$ are unitary matrices. Since $S_1$ and $S_2$ are also normal matrices, this shows that they can be simultaneously diagonalizable.

Similarly, let $j\neq i\in\{1,2,...,m\}$ and $S_3=\bar{B}_j^1$, then $S_1$ and $S_3$ can also be simultaneously diagonalizable. Recall the definition of $\bar{B}_j^1$, whose eigenvectors are given by the conjugate of Eq.\eqref{AliceEigenvector}, then we have that $S_1$ can be diagonalized in the following two different ways,
\begin{equation}
S_1=\sum_{a=0}^{d-1}g_a(\ket{a}_{ii}\bra{a})^*=\sum_{a=0}^{d-1}h_a(\ket{a}_{jj}\bra{a})^*,
\end{equation}
where for any $a$, $g_a$ and $h_a$ are unit complex number. Then
\begin{equation}
_{i}^{*}\bra{0}S_1\ket{0}_i^{*}=g_0=\sum_{a=0}^{d-1}h_a\cdot|_{i}^{*}\bra{0}a\rangle_j^{*}|^2.
\end{equation}

At the same time, for any $a\in\{0,1,...,d-1\}$ it can be verified that $0<|_{i}^{*}\bra{0}a\rangle_j^{*}|^2<1$. Combining this with the fact that $\sum_{a=0}^{d-1}|_{i}^{*}\bra{0}a\rangle_j^{*}|^2=1$, we obtain that there exists a $\gamma\in[0,2\pi)$ such that $g_0=h_0=...=h_{d-1}=e^{\mathbf i\gamma}$, which implies that $S_1=e^{\mathbf i\gamma}\cdot I$. According to the definition of $S_1$, we now have that $U_1=U_2$ up to a global phase, which completes the proof.
\end{proof}

The theorem shows the correctness of our plan, and we can indeed determine whether $U_1$ and $U_2$ have the same function by examining the underlying quantum nonlocality of $(U_1\otimes U_2)\ket{\Phi_d}$.

\section{The approximate case}

Since equivalent checking is an important issue in engineering applications, we need to address the situation that quantum circuits are realized approximately. For example, unitary operations $U_1$ and $U_2$ correspond to two different quantum circuits for a same quantum algorithm, hence they are supposed to be the same. However, due to certain mistakes one of the quantum circuits contains some more quantum gates, which implies that $U_1\neq U_2$. Here for simplicity we suppose the error in realizing quantum circuits are unitary errors. Note that this form of error covers the case that the preparation of $\ket{\Phi_d}$ is also affected by local unitary errors. Our numerical simulations show that more general form of weak errors that are expressed as quantum operations can also be handled, though it is hard to provide analytical discussions like in the unitary case below.

Since $U_1\neq U_2$, if we do the Bell experiment introduced previously using $U_1$ and $U_2$, the Bell expression value $I_{d,m}((U_1\otimes U_2)\ket{\Phi_d})$ will be not exactly $m(d-1)$. In this situation, an interesting question is, can we draw any nontrivial conclusions on $D(U_1,U_2)$, the distance between $U_1$ and $U_2$ based on the value of $I_{d,m}((U_1\otimes U_2)\ket{\Phi_d})$? We now show that this is indeed the case, and furthermore, $D(U_1,U_2)$ can be lower and upper bounded {analytically}.

In this paper, we choose the definition for $D(U_1,U_2)$ given by \cite{montanaro2016survey}, which is
\begin{equation}
D(U_1,U_2)=\sqrt{1-\left|\frac{1}{d}\Tr(U_1^TU_2)\right|^2}.
\end{equation}
Meanwhile, we need to use the following key fact (see Appendix A for its proof).
\begin{lemma}\label{lem:eigenvalues}
Suppose $\ket{\psi}$ is a $d\times d$ quantum state orthogonal to $\ket{\Phi_d}$. Then
\begin{equation}
-m\leq I_{d,m}(\ket{\psi})\leq m(d-2).
\end{equation}
\end{lemma}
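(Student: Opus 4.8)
The plan is to recast $I_{d,m}$ as the expectation of a single Hermitian operator and then reduce the Lemma to a second--eigenvalue estimate. First I would set $W=\sum_{i=1}^m\sum_{l=1}^{d-1}A_i^l\otimes\bar B_i^l$, so that $I_{d,m}(\ket{\phi})=\bra{\phi}W\ket{\phi}$; since $(A_i^l\otimes\bar B_i^l)^\dagger=A_i^{d-l}\otimes\bar B_i^{d-l}$ and $l\mapsto d-l$ permutes $\{1,\dots,d-1\}$, the operator $W$ is Hermitian. The key simplification is to restore the $l=0$ term, which equals $I\otimes I$: using $\sum_{l=0}^{d-1}\omega^{(a-b)l}=d\,\delta_{ab}$ one finds $\sum_{l=0}^{d-1}A_i^l\otimes\bar B_i^l=d\,Q_i$, where $Q_i=\sum_a(\ket{a}_i\otimes(\ket{a}_i)^*)(\bra{a}_i\otimes(\bra{a}_i)^*)$ is the rank-$d$ orthogonal projector onto $\mathrm{span}\{\ket{a}_i\otimes(\ket{a}_i)^*\}$. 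Hence $W=d\sum_i Q_i-mI$ and $I_{d,m}(\ket{\phi})=d\,\bra{\phi}\bigl(\sum_i Q_i\bigr)\ket{\phi}-m$, so everything reduces to the spectrum of $\tilde T:=\sum_i Q_i$.

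Next I would record that $\ket{\Phi_d}$ lies in the range of every $Q_i$ (equivalently $\bra{\Phi_d}Q_i\ket{\Phi_d}=1$, which follows from $\bra{\Phi_d}(\ket{a}_i\otimes(\ket{a}_i)^*)=1/\sqrt{d}$ for all $a,i$), so $\tilde T\ket{\Phi_d}=m\ket{\Phi_d}$. The lower bound is then immediate: $\tilde T$ is a sum of projectors, hence positive semidefinite, so $\bra{\psi}\tilde T\ket{\psi}\ge 0$ and $I_{d,m}(\ket{\psi})\ge -m$ for every $\ket{\psi}$. The content of the Lemma is the upper bound, which is equivalent to showing that the largest eigenvalue of $\tilde T$ on $\ket{\Phi_d}^\perp$ is at most $m(d-1)/d$; that alone yields $I_{d,m}(\ket{\psi})\le d\cdot m(d-1)/d-m=m(d-2)$.

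To control this second eigenvalue I would expose a hidden block structure. Writing $\ket{a}_i=D_i\ket{f_a}$ with $\ket{f_a}$ the Fourier basis and $D_i=\mathrm{diag}(\omega^{-k\alpha_i})$, one gets $Q_i=V_iP_{\mathcal G}V_i^\dagger$ with $V_i=D_i\otimes\bar D_i$ a diagonal unitary and $P_{\mathcal G}$ the projector onto the shift subspace $\mathcal G=\mathrm{span}\{\ket{G_r}\}$, $\ket{G_r}=\tfrac{1}{\sqrt d}\sum_k\ket{k}\ket{(k-r)\bmod d}$. Because $V_i$ is diagonal it preserves each difference sector $\mathcal H_r=\mathrm{span}\{\ket{k}\ket{(k-r)\bmod d}\}$, so $\tilde T$ is block diagonal, $\tilde T=\bigoplus_r\tilde T_r$ with $\tilde T_r=\sum_i\ket{G_r^{(i)}}\bra{G_r^{(i)}}$ and $\ket{G_r^{(i)}}=V_i\ket{G_r}\in\mathcal H_r$. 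The decisive observation is that inside $\mathcal H_r$ the coordinate vector of $\ket{G_r^{(i)}}$ is constant equal to $1$ on its last $d-r$ entries and equal to a single wraparound phase $e^{2\pi\mathbf i\alpha_i}$ on its first $r$ entries, so all $m$ of them lie in one $2$-dimensional subspace. Thus $\tilde T_r$ collapses to a $2\times2$ matrix whose off-diagonal entry is proportional to $\sum_{i=1}^m e^{2\pi\mathbf i\alpha_i}$; since $\alpha_i=(i-\tfrac12)/m$ places these $m$ points equally spaced on the unit circle, this sum vanishes, the block is already diagonal with eigenvalues $mr/d$ and $m(d-r)/d$, and the maximum over $r\in\{1,\dots,d-1\}$ is exactly $m(d-1)/d$ (the $r=0$ block contributes only the eigenvalue $m$ on $\ket{\Phi_d}$ itself, and $0$ on $\mathcal H_0\cap\ket{\Phi_d}^\perp$).

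The main obstacle is precisely this third step: the naive positivity bound gives only $\bra{\psi}\tilde T\ket{\psi}\le m$, i.e. $I_{d,m}\le m(d-1)$, which cannot separate $\ket{\Phi_d}$ from its orthogonal complement. Obtaining the sharp constant $m(d-1)/d$ requires recognizing the difference-sector block decomposition, the accidental $2$-dimensionality of each block, and the cancellation that removes the off-diagonal coupling. The choice $\alpha_i=(i-\tfrac12)/m$ is exactly what forces $\sum_i e^{2\pi\mathbf i\alpha_i}=0$, so I expect the careful bookkeeping of the wraparound phase within each $\mathcal H_r$, together with the verification of this cancellation, to be where the real work lies.
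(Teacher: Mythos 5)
Your proposal is correct and, at its core, follows the same route as the paper's Appendix A: both arguments rest on summing over the index $l$ to turn $\sum_{l}A_i^l\otimes\bar B_i^l$ into $d$ times a rank-$d$ projector minus the identity, on decomposing the space into the difference sectors $\mathcal H_r=\mathrm{span}\{\ket{k}\ket{(k-r)\bmod d}\}$, and on the cancellation $\sum_{x=1}^m e^{2\pi\mathbf i\alpha_x}=0$ of the wraparound phases (which, in both proofs, silently requires $m\ge 2$; this is harmless since a Bell test needs at least two settings). Where you diverge is only the finishing step: the paper bounds the resulting coefficient sums $\bigl|\sum_k\gamma_{k(k+r)}\bigr|^2$ by Cauchy--Schwarz, whereas you diagonalize each $2\times 2$ block $\tilde T_r$ exactly, obtaining the eigenvalues $mr/d$ and $m(d-r)/d$. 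Your version is marginally stronger --- it exhibits the full spectrum of the Bell operator, so the bounds $-m$ and $m(d-2)$ appear as exact extremal eigenvalues on $\ket{\Phi_d}^{\perp}$ --- at the cost of verifying the two-dimensionality of each block, which is precisely the bookkeeping you flagged as the real work.
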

Having this fact, we are ready to give the second main result of the current paper.
\begin{theorem}\label{thm:approximate}
Suppose $V=I_{d,m}((U_1\otimes U_2)\ket{\Phi_d})$, then we have that
\begin{equation}
\sqrt{1-\frac{V+m}{md}}\leq D(U_1,U_2)\leq \sqrt{1-\frac{V-m(d-2)}{m}}.
\end{equation}
\label{thm:approx}
\end{theorem}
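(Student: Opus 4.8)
The plan is to pin down $V$ as an explicit affine function of $D(U_1,U_2)^2$ by splitting the output state $(U_1\otimes U_2)\ket{\Phi_d}$ into its component along $\ket{\Phi_d}$ and its orthogonal complement, and then to use Lemma~\ref{lem:eigenvalues} to trap the orthogonal contribution between $-m$ and $m(d-2)$.

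First I would compute the overlap of the output state with the maximally entangled state. Using the two identities already invoked in the excerpt, $(M\otimes N)\ket{\Phi_d}=(I\otimes NM^T)\ket{\Phi_d}$ and $\bra{\Phi_d}(I\otimes M)\ket{\Phi_d}=\Tr(M)/d$, together with cyclicity of the trace, one finds
\[
\bra{\Phi_d}(U_1\otimes U_2)\ket{\Phi_d}=\tfrac{1}{d}\Tr(U_1^TU_2)=:c,
\]
so that $|c|^2$ is precisely the quantity appearing in the definition of the distance, giving $D^2=1-|c|^2$ with $D:=D(U_1,U_2)$. Writing $(U_1\otimes U_2)\ket{\Phi_d}=\alpha\ket{\Phi_d}+\beta\ket{\psi^\perp}$, where $\ket{\psi^\perp}$ is a unit vector orthogonal to $\ket{\Phi_d}$, then yields $\alpha=c$, $|\alpha|^2=1-D^2$ and $|\beta|^2=D^2$.

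The crucial observation is that $\ket{\Phi_d}$ is an eigenvector of the Bell operator $M:=\sum_{i=1}^m\sum_{l=1}^{d-1}A_i^l\otimes\bar B_i^l$ with eigenvalue $m(d-1)$: indeed, the calculation already performed in the excerpt shows $(A_i^l\otimes\bar B_i^l)\ket{\Phi_d}=\ket{\Phi_d}$ for every $i,l$, so $M\ket{\Phi_d}=m(d-1)\ket{\Phi_d}$. Moreover $M$ is Hermitian, which follows from the reindexing $l\mapsto d-l$ together with the unitary identity $(A_i^{d-l})^T=(A_i^l)^*$. Hence both cross terms $\bra{\Phi_d}M\ket{\psi^\perp}$ and $\bra{\psi^\perp}M\ket{\Phi_d}$ vanish, and since $I_{d,m}(\ket\phi)=\bra\phi M\ket\phi$, expanding $V=\bra\phi M\ket\phi$ on $\ket\phi=\alpha\ket{\Phi_d}+\beta\ket{\psi^\perp}$ collapses to
\[
V=|\alpha|^2\,m(d-1)+|\beta|^2\,I_{d,m}(\ket{\psi^\perp})=(1-D^2)\,m(d-1)+D^2\,W,
\]
where $W:=I_{d,m}(\ket{\psi^\perp})$.

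Finally I would solve for the distance, obtaining $D^2=\dfrac{m(d-1)-V}{m(d-1)-W}$. Because $\ket{\psi^\perp}\perp\ket{\Phi_d}$, Lemma~\ref{lem:eigenvalues} gives $-m\le W\le m(d-2)$, so the denominator $m(d-1)-W>0$ while the numerator $m(d-1)-V\ge 0$ by the Tsirelson bound; the right-hand side is therefore increasing in $W$. Substituting the extreme values $W=m(d-2)$ and $W=-m$ produces $D^2\le\frac{m(d-1)-V}{m}$ and $D^2\ge\frac{m(d-1)-V}{md}$, which after the elementary rewritings $\frac{m(d-1)-V}{m}=1-\frac{V-m(d-2)}{m}$ and $\frac{m(d-1)-V}{md}=1-\frac{V+m}{md}$ and taking square roots are exactly the claimed bounds. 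The only step demanding genuine care is the vanishing of the cross terms, i.e.\ the eigenvector and Hermiticity properties of $M$ at $\ket{\Phi_d}$; once these are secured, the rest is the monotonicity argument and routine algebra.
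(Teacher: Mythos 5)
Your proof is correct and follows essentially the same route as the paper's: decompose $(U_1\otimes U_2)\ket{\Phi_d}$ into its component along $\ket{\Phi_d}$ plus an orthogonal part, identify the overlap with $\frac{1}{d}\Tr(U_1^TU_2)$ so that $D^2=1-|c|^2$, and apply Lemma~\ref{lem:eigenvalues} to trap the orthogonal contribution. The only difference is that you explicitly justify the vanishing of the cross terms via the eigenvector and Hermiticity properties of the Bell operator --- a step the paper's proof drops without comment --- which is a small but genuine gain in rigor rather than a different argument.
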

\begin{proof}
Let $\ket{\alpha}=(U_1\otimes U_2)\ket{\Phi_d}=(I\otimes U_2U_1^T)\ket{\Phi_d}$. Suppose an orthogonal decomposition of $U_2U_1^T$ is $U_2U_1^T=\sum_{j=0}^{d-1}e^{\mathbf i\theta_j}\ket{\lambda_j}\bra{\lambda_j}$, where $\theta_j\in[0,2\pi)$. Note that we also have $\ket{\Phi_d}=\sum_{j=0}^{d-1}\ket{\lambda_j}\ket{\lambda_j}^*/\sqrt{d}$. Therefore, we have that
\begin{equation}
\ket{\alpha}=\sum_{j=0}^{d-1}e^{\mathbf i\theta_j}\ket{\lambda_j}\ket{\lambda_j}^*/\sqrt{d}.
\end{equation}
Let $\ket{\alpha}=c_1\ket{\Phi_d}+c_2\ket{\Phi^\bot}$, where $c_1$ and $c_2$ are complex numbers, $|c_1|^2+|c_2|^2=1$, and $\langle \Phi^\bot\ket{\Phi_d}=0$. Then it can be seen that
\begin{equation}
c_1=\langle \Phi_d|\alpha\rangle = \sum_{j=0}^{d-1}\frac{e^{\mathbf i\theta_j}}{d}=\frac{\Tr(U_2U_1^T)}{d},
\end{equation}
which means that $D(U_1,U_2)^2=1-|c_1|^2$.

For convenience, let $B=\sum_{i=1}^m\sum_{l=1}^{d-1}(A_i^l\otimes \bar{B}_i^l)$. Then it holds that
\begin{align*}
	   V = & \bra{\alpha}B\ket{\alpha}=(c_1^*\bra{\Phi_d}+c_2^*\bra{\Phi^\bot})B(c_1\ket{\Phi_d}+c_2\ket{\Phi^\bot})\\
       =&|c_1|^2\bra{\Phi_d}B\ket{\Phi_d}+|c_2|^2\bra{\Phi^\bot}B\ket{\Phi^\bot}\\
	   =&|c_1|^2\cdot m(d-1) +(1-|c_1|^2)\bra{\Phi^\bot}B\ket{\Phi^\bot}.
\end{align*}
According to Lemma \ref{lem:eigenvalues}, we have that $-m\leq \bra{\Phi^\bot}B\ket{\Phi^\bot}\leq m(d-2)$, which means that
\begin{equation}
\sqrt{\frac{V-m(d-2)}{m}}\leq |c_1|\leq\sqrt{\frac{V+m}{md}}.
\end{equation}
Combining this with the fact that $D(U_1,U_2)^2=1-|c_1|^2$, we complete the proof.
\end{proof}

Note that when $V=m(d-1)$, both the lower and the upper bounds are exactly $1$, implying that both of them are tight in this case. When $V$ does not achieve $m(d-1)$, the lower bound for $D(U_1,U_2)$ reveals the minimum distance between $U_1$ and $U_2$, thus in some sense it is more informative than the upper bound.

To examine the performance of the above analytical bounds, we test them with numerical simulations. For this, we generate many random instances for $U_1$ and $U_2$, then for each pair of $U_1$ and $U_2$ we compute the corresponding exact values of $D(U_1,U_2)$, which are next compared with the lower and upper bounds for $D(U_1,U_2)$ given by Theorem \ref{thm:approximate}. The results are listed in Fig.\ref{fig:result_approx}, where it can be seen that the lower bound is quite tight in many instances.
\begin{figure}[!ht]
	\centering
	\includegraphics[width=0.5\textwidth]{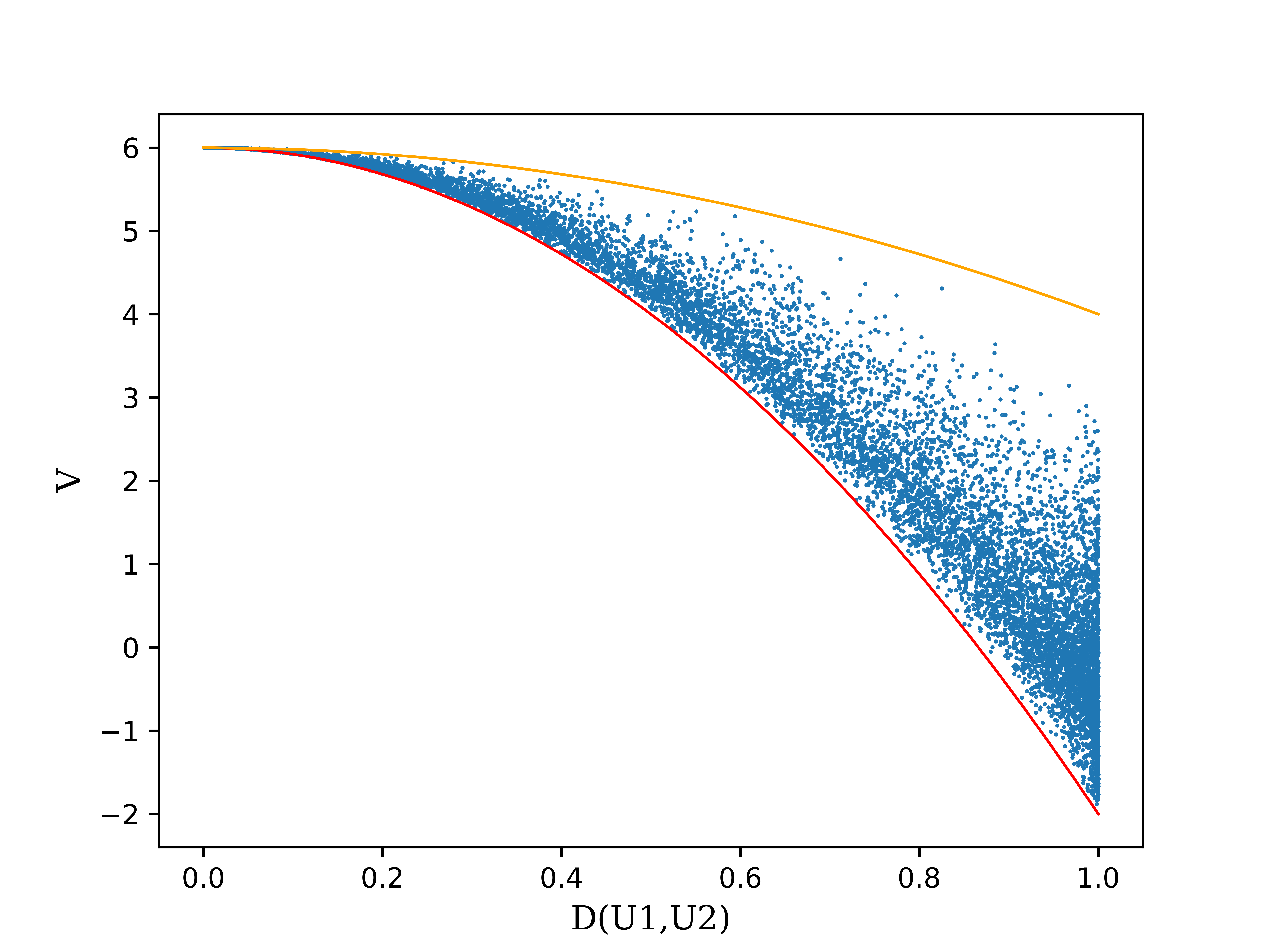}
	\caption{The values of $D(U_1,U_2)$ and $V$, where $d=4$ and $m=2$. Here each blue point represents a pair of $U_1$
     and $U_2$ that is randomly generated, on which the exact values of $D(U_1,U_2)$ and $V$ are given. The red and the orange solid lines are respectively the lower and the upper bounds provided by Theorem \ref{thm:approx}.}
	\label{fig:result_approx}
\end{figure}

\section{Direct determination of the distance $D(U_1,U_2)$}

In Fig.\ref{fig:result_approx}, it can be observed that in most cases the upper bound for $D(U_1,U_2)$ given by Theorem \ref{thm:approximate} is quite loose compared with the lower bound. From the proof for Theorem \ref{thm:approximate}, it can be seen that the reason is that the bound $\bra{\Phi^\bot}B\ket{\Phi^\bot}\leq m(d-2)$ we have utilized is far from tight in most cases. If we could somehow improve the upper bound for $\bra{\Phi^\bot}B\ket{\Phi^\bot}$, our estimation for $D(U_1,U_2)$ will be more accurate accordingly.

To understand the behave of $\bra{\phi}B\ket{\phi}$, we studied its value for a uniformly random pure state $\ket{\phi}$. It turns out that $\bra{\phi}B\ket{\phi}$ is very small with probability close to $1$. Particularly, we have the following fact, and its proof can be seen in Appendix B.
\begin{lemma}
    Given $0<\delta<1$. Suppose $|\psi\rangle$ is a $d\times d$ quantum state, which as a unit vector is chosen uniformly at random on the $d^2$-dimensional real unit sphere. Then with the probability of no less than $1-\delta$ it holds that
    \begin{equation}
        I_{d,m}(|\psi\rangle)\le m\sqrt{\frac{4}{3d\delta}}.
    \end{equation}
\end{lemma}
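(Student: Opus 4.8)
The plan is to view $I_{d,m}(\ket{\psi})=\bra{\psi}B\ket{\psi}$ as a quadratic form in the Hermitian Bell operator $B=\sum_{i=1}^m\sum_{l=1}^{d-1}(A_i^l\otimes\bar B_i^l)$ introduced above, and to control its distribution over a random real $\ket{\psi}$ by the second-moment method. Since $\ket{\psi}$ is a real unit vector and the antisymmetric part of $B$ annihilates real quadratic forms, $X:=\bra{\psi}B\ket{\psi}=\psi^{T}\mathrm{Re}(B)\psi$ is a \emph{real} random variable with $\mathrm{Re}(B)=\tfrac12(B+\bar B)$ real symmetric. The lemma will follow once I establish that $\mathbb{E}[X]=0$ and $\mathbb{E}[X^2]\le 4m^2/(3d)$, and then apply Markov's inequality to $X^2$.

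For the moments I would use the standard moments of the uniform distribution on the real sphere in $\mathbb{R}^{d^2}$, namely $\mathbb{E}[\psi_p\psi_q]=\delta_{pq}/d^2$ and $\mathbb{E}[\psi_p\psi_q\psi_u\psi_v]=(\delta_{pq}\delta_{uv}+\delta_{pu}\delta_{qv}+\delta_{pv}\delta_{qu})/[d^2(d^2+2)]$. Because $\Tr(A_i^l)=\sum_{a=0}^{d-1}\omega^{al}=0$ for $1\le l\le d-1$, these give
\[
\mathbb{E}[X]=\frac{\Tr(B)}{d^2}=0,\qquad \mathbb{E}[X^2]=\frac{\Tr(B^2)+\Tr(B\bar B)}{d^2(d^2+2)} .
\]
To evaluate the two traces I would first reformulate $B$: summing the index $l$ over a full period completes a projector, $\sum_{l=0}^{d-1}A_i^l\otimes(A_i^l)^{*}=d\,P_i$, where $P_i$ is the rank-$d$ projector onto $\mathrm{span}\{\ket{a}_i\otimes\ket{a}_i^{*}\}_a$; hence $B=d\sum_{i=1}^m P_i-mI$. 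Substituting, both traces reduce to $\Tr(B^2)=d^2\sum_{i,j}\Tr(P_iP_j)-d^2m^2$ and $\Tr(B\bar B)=d^2\sum_{i,j}\Tr(P_i\bar P_j)-d^2m^2$, and since the $\ket{a}_i$ are Fourier vectors, $\Tr(P_iP_j)=\sum_{a,a'}|{}_i\langle a|a'\rangle_j|^{4}$ (with an analogous conjugated expression for $\Tr(P_i\bar P_j)$) is a sum of \emph{fourth} powers of the overlaps $|{}_i\langle a|a'\rangle_j|^2=\tfrac1{d^2}\sin^2(\pi\mu)/\sin^2(\pi\mu/d)$, a Dirichlet kernel in the shift $\mu$ built from $a-a'$ and $\alpha_i-\alpha_j$.

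The hard part is summing these fourth powers in closed form. Here I would invoke the cosecant identity $\sum_{r=0}^{d-1}\csc^4\!\big(\tfrac{\pi(r+\theta)}{d}\big)=d^4\csc^4(\pi\theta)-\tfrac23 d^2(d^2-1)\csc^2(\pi\theta)$, which collapses each overlap sum to $g(\theta):=\sum_{r=0}^{d-1}|f(r+\theta)|^4=1-\tfrac{2(d^2-1)}{3d^2}\sin^2(\pi\theta)$, where $f$ is the Dirichlet kernel. Because the relevant shifts $\theta$ are $(i-j)/m$ and $-(i+j-1)/m$, summing over the settings requires only $\sum_{i,j=1}^m\sin^2(\pi(i-j)/m)=\sum_{i,j=1}^m\sin^2(\pi(i+j-1)/m)=m^2/2$, both of which follow from $\sum_{k=1}^m e^{2\pi\mathbf i k/m}=0$. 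This yields the exact value
\[
\mathbb{E}[X^2]=\frac{2m^2(2d-1)(d-1)}{3d(d^2+2)}\le \frac{4m^2}{3d},
\]
the inequality holding for all $d\ge1$ since $(2d-1)(d-1)\le 2(d^2+2)$.

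Finally, applying Markov's inequality to the nonnegative variable $X^2$ gives
\[
\Pr[X\ge t]\le\Pr[X^2\ge t^2]\le\frac{\mathbb{E}[X^2]}{t^2}\le\frac{4m^2}{3d\,t^2},
\]
and choosing $t=m\sqrt{4/(3d\delta)}$ makes the right-hand side equal to $\delta$; hence with probability at least $1-\delta$ one has $I_{d,m}(\ket{\psi})=X<m\sqrt{4/(3d\delta)}$, as claimed. I expect the genuinely delicate step to be the exact fourth-power summation of the Dirichlet kernels via the cosecant identity; the sphere-moment bookkeeping and the exponential-sum cancellations are routine.
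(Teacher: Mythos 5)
Your proof is correct, and its probabilistic skeleton --- first and second moments over the uniform real sphere followed by a Chebyshev-type tail bound (your Markov inequality applied to $X^2$ with $\mathbb{E}[X]=0$ is literally the same step as the paper's Chebyshev on $g$) --- coincides with the paper's. Where you genuinely diverge is in how the variance is evaluated. The paper reuses the identity already established in its Appendix A, $I_{d,m}(\ket{\psi})=m\sum_{r}\bigl(|\sum_k\gamma_{k(k+r)}|^2+|\sum_k\gamma_{k(k+r-d)}|^2\bigr)-m$, so the second moment reduces to fourth moments of the coordinates $\gamma_{kj}$ on the sphere ($\mathbb{E}|\gamma_{ij}|^4=3/(d^4+2d^2)$, $\mathbb{E}|\gamma_{ij}|^2|\gamma_{kl}|^2=1/(d^4+2d^2)$) plus elementary counting. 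You instead compute $\Tr(B^2)$ and $\Tr(B\bar B)$ directly from the projector decomposition $B=d\sum_iP_i-mI$, which forces you to sum fourth powers of Dirichlet-kernel overlaps; the cosecant identity you invoke is in fact correct (it follows by differentiating $\sum_{r=0}^{d-1}\csc^2(\pi(r+\theta)/d)=d^2\csc^2(\pi\theta)$ twice in $\theta$), and reassuringly both computations land on the identical exact value $\mathbb{E}[X^2]=m^2\,\mathrm{Var}(g)=\frac{2m^2(2d-1)(d-1)}{3d(d^2+2)}\le\frac{4m^2}{3d}$. Your route is self-contained and does not lean on the coefficient formula from Appendix A, at the price of a nontrivial trigonometric summation that you should actually prove rather than merely invoke; the paper's route is computationally lighter because the diagonal-sum representation of $I_{d,m}$ has already been paid for elsewhere.
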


Though for a random pair $U_1$ and $U_2$, it is possible that the distribution of $\ket{\Phi^\bot}$ is not uniformly random, the above lemma still helps us to understand why the estimation $\bra{\Phi^\bot}B\ket{\Phi^\bot}\leq m(d-2)$ is quite loose overall. Inspired by this, we now adjust the structure of our protocol, and the purpose is to make sure that the new value of $\bra{\Phi^\bot}B\ket{\Phi^\bot}$ is low.

Again $U_1$ and $U_2$ are the two $n$-qubit circuits that we want to compare. Now we construct a $2n$-qubit circuit as shown in Fig.\ref{fig:determination_circuit}, and denote it as $U_1'$, where $U_1$ is a part of $U_1'$. And $U_2'$ is constructed similarly. Then we apply our protocol to compare the new quantum circuits $U_1'$ and $U_2'$, whose size is now larger.

We now prove that this adjustment will pin down the new value of $\bra{\Phi^\bot}B\ket{\Phi^\bot}$ to be $-m$, which is actually the smallest possible. As a result, the upper bound for $D(U_1',U_2')$ given by Theorem \ref{thm:approximate} now matches the lower bound completely. That is to say, from the value of Bell expression $I_{d,m}((U_1'\otimes U_2')|\Psi_d\rangle)$, $D(U_1,U_2)=D(U_1',U_2')$ can be determined directly, where $d=2^{2n}$.

\begin{figure}[!ht]
	\centering
	\includegraphics[width=0.3\textwidth]{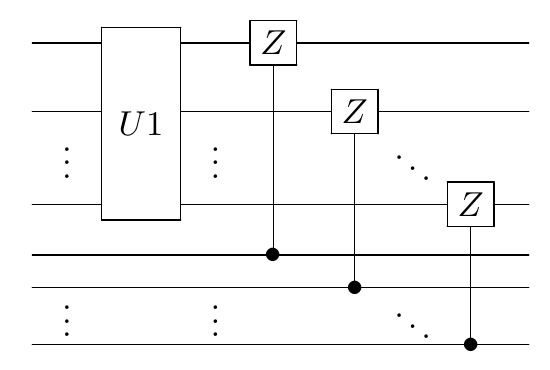}
	\caption{The $2n$-qubit circuit $U_1'$, which contains the $n$-qubit circuit $U_1$. }
	\label{fig:determination_circuit}
\end{figure}

\begin{theorem}
    \label{thm:directly}
    Suppose $V=I_{d,m}((U_1'\otimes U_2')\ket{\Phi_d})$ where $d=2^{2n}$, then we have that
    \begin{equation}
    D(U_1,U_2)=D(U_1',U_2')=\sqrt{1-\frac{V+m}{md}}.
    \end{equation}
\end{theorem}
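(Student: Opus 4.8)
The plan is to show that the modified protocol forces the quantity $\bra{\Phi^\bot}B\ket{\Phi^\bot}$ appearing in the proof of Theorem~\ref{thm:approximate} to attain its minimum value $-m$, at which point the upper and lower bounds there collapse onto each other. Recall from that proof that, writing $(U_1'\otimes U_2')\ket{\Phi_d}=c_1\ket{\Phi_d}+c_2\ket{\Phi^\bot}$ with $B=\sum_{i=1}^m\sum_{l=1}^{d-1}(A_i^l\otimes\bar B_i^l)$, one has
\[ V=|c_1|^2\,m(d-1)+(1-|c_1|^2)\bra{\Phi^\bot}B\ket{\Phi^\bot},\qquad D(U_1',U_2')^2=1-|c_1|^2. \]
The two estimates of Theorem~\ref{thm:approximate} differ only because $\bra{\Phi^\bot}B\ket{\Phi^\bot}$ was merely bracketed in $[-m,m(d-2)]$ via Lemma~\ref{lem:eigenvalues}. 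Hence, once I establish $\bra{\Phi^\bot}B\ket{\Phi^\bot}=-m$, solving for $|c_1|^2$ gives $|c_1|^2=(V+m)/(md)$ exactly, and therefore $D(U_1',U_2')=\sqrt{1-(V+m)/(md)}$, i.e.\ the lower bound becomes an equality. It then remains to check, separately, that the construction preserves the distance, $D(U_1,U_2)=D(U_1',U_2')$, which (since both distances are of the form $\sqrt{1-|\cdot|^2}$) reduces to verifying $\frac1d|\Tr(U_1'^TU_2')|=\frac1{d_0}|\Tr(U_1^TU_2)|$ with $d_0=2^n$.

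Next I would characterize exactly when the minimum $-m$ is reached. Since $A_i^l=(A_i^1)^l$ and $\bar B_i^l=(\bar B_i^1)^l$, each summand factorizes as $A_i^l\otimes\bar B_i^l=(A_i^1\otimes\bar B_i^1)^l$, with $A_i^1\otimes\bar B_i^1$ a unitary of order $d$. Using $\sum_{l=0}^{d-1}T^l=d\,P$ for the projector $P$ onto the $+1$-eigenspace of an order-$d$ unitary $T$, I get $\sum_{l=1}^{d-1}(A_i^1\otimes\bar B_i^1)^l=dP_i-I$, where $P_i$ projects onto the $+1$-eigenspace of $A_i^1\otimes\bar B_i^1$, spanned by $\{\ket{a}_i\otimes\ket{a}_i^*\}_{a=0}^{d-1}$. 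Summing over $i$ yields the transparent form $B=d\sum_{i=1}^mP_i-mI$, whence
\[ \bra{\Phi^\bot}B\ket{\Phi^\bot}=d\sum_{i=1}^m\bigl\|P_i\ket{\Phi^\bot}\bigr\|^2-m\ \ge\ -m, \]
with equality exactly when $P_i\ket{\Phi^\bot}=0$ for every $i$. Writing $W'=U_2'U_1'^T$ and using $\ket{\Phi^\bot}\propto(I\otimes W')\ket{\Phi_d}-c_1\ket{\Phi_d}$ together with $\braket{a_ia_i^*|\lambda_j\lambda_j^*}=|\braket{a_i|\lambda_j}|^2$, a short computation shows that $P_i\ket{\Phi^\bot}=0$ for all $i$ is equivalent to the clean condition
\[ \bra{a}_iW'\ket{a}_i=\frac{\Tr(W')}{d}\quad\text{for every measurement eigenvector }\ket{a}_i; \]
that is, $W'$ must have constant diagonal, equal to $\Tr(W')/d$, in each shifted-Fourier basis of Eqs.~\eqref{AliceEigenvector}--\eqref{BobEigenvector}.

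The whole statement therefore comes down to verifying that the $2n$-qubit circuit of Fig.~\ref{fig:determination_circuit} is engineered precisely so that its relative unitary $W'=U_2'U_1'^T$ (i) has constant diagonal $\Tr(W')/d$ in all bases $\{\ket{a}_i\}$, and (ii) obeys $\frac1d|\Tr(W')|=\frac1{d_0}|\Tr(U_2U_1^T)|$ so the distance is unchanged. Concretely, I would read off $W'$ from the circuit, expand $\bra{a}_iW'\ket{a}_i=\frac1d\sum_{k,k'}\omega^{(k'-k)(a-\alpha_i)}W'_{kk'}$, and check that grouping terms by $k'-k \bmod d$ annihilates every coefficient of $\omega^{ra}$ with $r\ne0$, for each $\alpha_i=(i-1/2)/m$, while confirming $\Tr(W')$ carries the trace of $U_2U_1^T$ intact. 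I expect this constructive verification to be the \emph{main obstacle}: it is the only step where the detailed geometry of the auxiliary register and the specific phases $\alpha_i$ enter, and one must show that the off-diagonal ``circulant moments'' of $W'$ vanish simultaneously for all $m$ settings without disturbing the trace. By contrast, the reduction carried out in the first two paragraphs is essentially formal; the real content lies in checking that the circuit of Fig.~\ref{fig:determination_circuit} realizes the diagonal-constancy condition.
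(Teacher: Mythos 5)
Your reduction is correct, and it is essentially a repackaging of what the paper already has in hand: the identity $B=d\sum_{i=1}^{m}P_i-mI$ evaluated on $(I\otimes W')\ket{\Phi_d}$ is the same statement as the formula $I_{d,m}=m\sum_{r}\bigl(|\sum_k\gamma_{k(k+r)}|^2+|\sum_k\gamma_{k(k+r-d)}|^2\bigr)-m$ that the paper imports from the proof of Lemma~\ref{lem:eigenvalues}, with $\gamma_{kj}\propto W'_{jk}$; your ``constant diagonal in every basis $\{\ket{a}_i\}$'' condition is exactly the vanishing of the sums along both halves of every $r$-th wrapped diagonal for $r\neq0$. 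Likewise $D(U_1,U_2)=D(U_1',U_2')$ is the immediate computation $U_Z^TU_Z=I$ and $\Tr(U_1^TU_2\otimes I)=2^n\Tr(U_1^TU_2)$, which you state but do not perform. The problem is that you stop precisely where the content of the theorem lies: you name the verification that the circuit of Fig.~\ref{fig:determination_circuit} actually satisfies the diagonal-constancy condition as the ``main obstacle'' and leave it undone. That step is not formal bookkeeping: for a generic unitary $W'$ the condition fails (which is exactly why the upper bound of Theorem~\ref{thm:approximate} is loose in Fig.~\ref{fig:result_approx}), so without an argument exploiting the specific controlled-$Z$ structure the theorem is not established.

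The missing argument is short, and the paper gives it as two properties of the amplitudes $\gamma_{kj}$ of $(U_1'\otimes U_2')\ket{\Phi_d}$ in the computational basis. Since $U_1'$ and $U_2'$ act on the auxiliary $n$-qubit register only through the (diagonal) controlled-$Z$ gates, Alice's and Bob's substrings on that register must coincide, otherwise $\gamma_{kj}=0$. For a surviving pair with $k\neq j$, the two strings therefore differ in some position of the other register; flipping the matching bit of the common auxiliary substring in both $k$ and $j$ yields another surviving pair on the same diagonal $j-k=r$ whose amplitude is exactly $-\gamma_{kj}$, because only one of the two values of that bit triggers the $Z$ phases on the pair of unequal partner bits. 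This involution cancels each off-diagonal sum $\sum_k\gamma_{k(k+r)}$ and $\sum_k\gamma_{k(k+r-d)}$ in pairs, which is precisely your condition $P_i\ket{\Phi^\bot}=0$ for all $i$. You need to supply this (or an equivalent) cancellation argument; as written, your proposal proves only that the claimed formula is \emph{equivalent to} a property of the circuit, not that the circuit possesses that property.
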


\begin{proof}
    Denote the operation of all the control-$Z$ gates combined in Fig.\ref{fig:determination_circuit} by $U_{Z}$ (as a unitary matrix on $2n$ qubits). That is $U_1'=U_{Z}(U_1\otimes I), U_2'=U_{Z}(U_2\otimes I)$. Then
    \begin{align*}
        D(U_1', U_2')&=\sqrt{1-\left|\frac{1}{2^{2n}}\Tr(U_1'^TU_2')\right|}\\
        &=\sqrt{1-\left|\frac{1}{2^{2n}}\Tr((U_1^T\otimes I)U_Z^TU_{Z}(U_2\otimes I))\right|}\\
        &=\sqrt{1-\left|\frac{1}{2^{2n}}\Tr((U_1^T\otimes I)(U_2\otimes I))\right|}\\
        &=\sqrt{1-\left|\frac{1}{2^{2n}}\Tr(U_1^TU_2\otimes I)\right|}\\
        &=\sqrt{1-\left|\frac{1}{2^{n}}\Tr(U_1^TU_2)\right|}\\
        &=D(U_1,U_2).
    \end{align*}

    In the proof for Lemma \ref{lem:eigenvalues} (see Appendix A), we have already known that if we let $(U_1'\otimes U_2')\ket{\Phi_d}=\sum_{k=0}^{d-1}\sum_{j=0}^{d-1}\gamma_{kj}|k\rangle|j\rangle$, it holds that
    \begin{align*}
        I_{d,m}((U_1'\otimes U_2')\ket{\Phi_d})=&m\sum_{r=0}^{d-1}\left(\left|\sum_{k=0}^{d-r-1}\gamma_{k(k+r)}\right|^2\right. \\
        &+\left.\left|\sum_{k=d-r}^{d-1}\gamma_{k(k+r-d)}\right|^2\right)-m.
    \end{align*}
    Now let us notice the following properties of $\gamma_{kj}$. Let $k=\overline{a_1a_2\dots a_nb_1b_2\dots b_n}$ and $j=\overline{c_1c_2\dots c_nd_1d_2\dots d_n}$ be binary representations of $k$ and $j$, where $a_i,b_i,c_i,d_i\in\{0,1\}$ for $1\leq i\leq n$. Then based on the construction of $U_1'$ and $U_2'$ given by Fig.\ref{fig:determination_circuit}, it can be verified that
    \begin{enumerate}
        \item If $\overline{a_1a_2\dots a_n}\neq\overline{c_1c_2\dots c_n}$, then $\gamma_{kj}=0$;
        \item If $\overline{a_1a_2\dots a_n}=\overline{c_1c_2\dots c_n}$ and $b_i\neq d_i$, let $k'=\overline{a_1a_2\dots a_{i-1}(1-a_i)a_{i+1}\dots a_nb_1b_2\dots b_n},j'=\overline{c_1c_2\dots c_{i-1}(1-c_i)c_{i+1}\dots c_nd_1d_2\dots d_n}$, then $\gamma_{k'j'}=-\gamma_{kj}$,
            where we have utilized the facts that only one of $a_i$ and $1-a_i$ can trigger the $Z$ operators on the positions $b_i$ and $d_i$ and that $b_i\neq d_i$.
    \end{enumerate}
    By using the properties repeatedly, one can prove that $\sum_{k=0}^{d-r-1}\gamma_{k(k+r)}=\sum_{k=d-r}^{d-1}\gamma_{k(k+r-d)}=0$ when $r\neq0$. Thus we have that
    \begin{align*}
        I_{d,m}((U_1'\otimes U_2')\ket{\Phi_d})&=m\left|\sum_{k=0}^{d-1}\gamma_{kk}\right|^2-m\\
        &=md\left|\bra{\Phi_d}(U_1'\otimes U_2')\ket{\Phi_d}\right|^2-m\\
        &=md(1-D(U_1',U_2')^2)-m.
    \end{align*}
    That is, $D(U_1',U_2')=\sqrt{1-\frac{V+m}{md}}$.
\end{proof}

Therefore, to determine the distance between two $n$-qubit quantum circuits, we can embed them into two larger $2n$-qubit quantum circuits and then apply our original protocol on the latter. Though the cost is a little bit higher, the estimation for the distance can be much more accurate. We also perform numerical simulations to verify our modified protocol, where again random $U_1$ and $U_2$ are sampled. The results are listed in Fig.\ref{fig:result_cost}.

\begin{figure*}[!ht]
	\centering
	\includegraphics[width=0.85\textwidth]{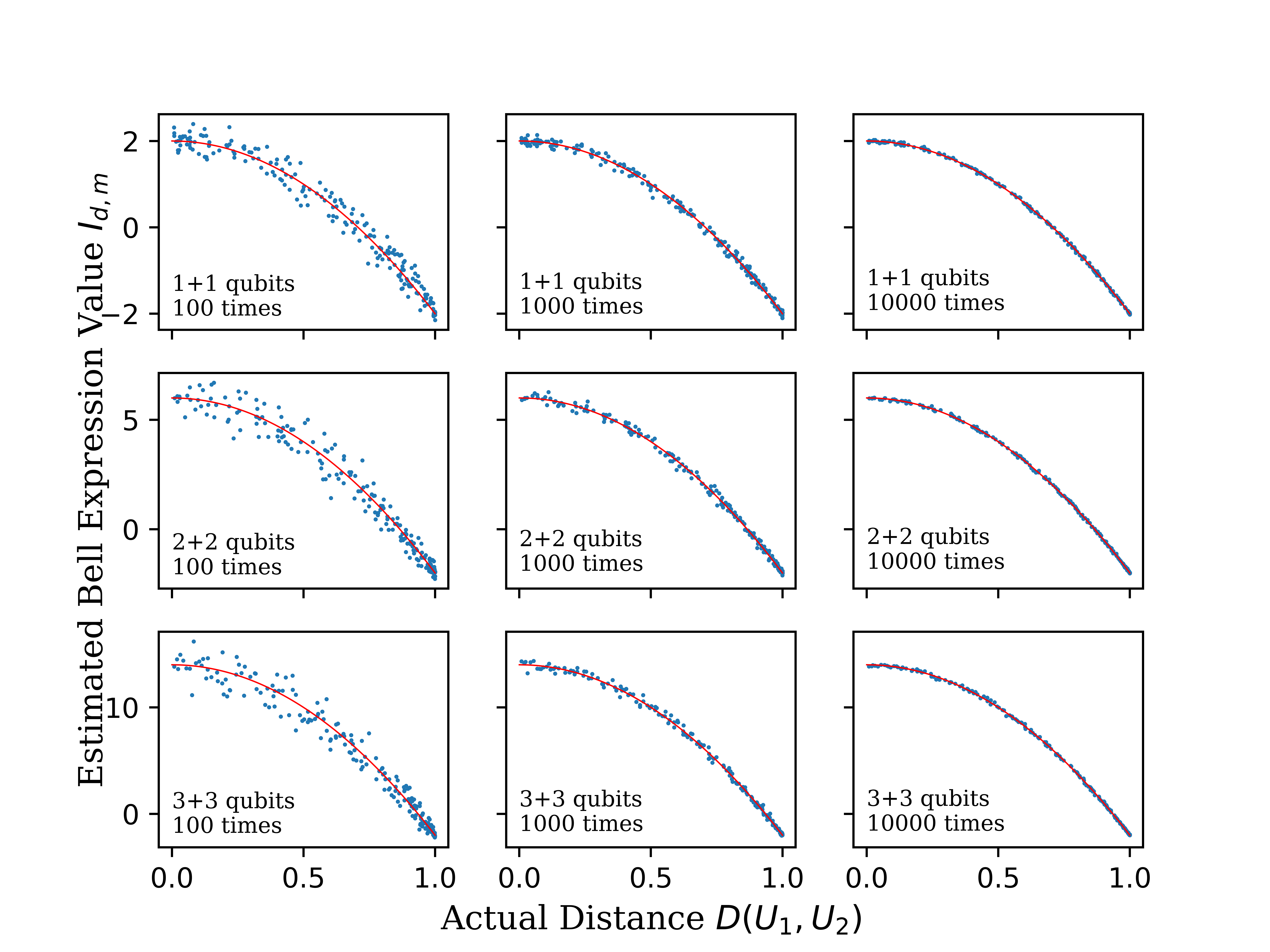}
	\caption{Determining the distance $D(U_1,U_2)$ based on Theorem \ref{thm:directly}, where $U_1$ and $U_2$ are randomly sampled, and $m=2$. Every blue point corresponds to picking up a specific pair of $U_1$ and $U_2$, and then running the Mento Carlo process described in section \ref{sec:cost}, where $s=100,1000,10000$ respectively. We sample $U_1$ and $U_2$ of 1, 2 or 3 qubits, which means the experiments are performed on two 2, 4 or 6-qubit circuits. The red solid line is given by the Theorem \ref{thm:directly}.}
	\label{fig:result_cost}
\end{figure*}

\section{The analysis of computational cost}
\label{sec:cost}

Now let us analyze the computational cost of our modified protocol, that is, the number of times that we have to run the unknown circuits in order to give a good estimation of the distance $D(U_1,U_2)$ based on Theorem \ref{thm:directly}. For convenience, we reformulate the Bell expression as below, and the corresponding details can be found in~\cite{salavrakos2017bell}.
\begin{align}
    I_{d,m}&=dmI_{d,m}'-m\label{normalization}\\
    I_{d,m}'&=\frac{1}{m}\sum_{k=0}^{d-1}\sum_{i=1}^m\alpha_k[P(A_i=B_i+k)+P(B_i=A_{i+1}+k)]
\end{align}
where $\alpha_k=\frac{1}{2d}\tan(\frac{\pi}{2m})\cot(\frac{\pi}{d}(k+\frac{1}{2m}))$ and $A_{m+1}=A_1+1$. For simplicity, in this section $I_{d,m}$ and $I_{d,m}'$ are short for $I_{d,m}((U_1'\otimes U_2')|\Psi_d\rangle)$ and $I_{d,m}'((U_1'\otimes U_2')|\Psi_d\rangle)$ respectively. Since $-m\le I_{d,m}\le m(d-1)$, we have $0\le I_{d,m}'\le1$. Meanwhile, Theorem \ref{thm:directly} implies that $D(U_1,U_2)=\sqrt{1-I_{d,m}'}$.

Now we consider the estimation of $I_{d,m}'$, where $d=2^{2n}$. First Alice and Bob apply circuits on their own subsystems of the maximally entangled state to get $(U_1'\otimes U_2')|\Psi_d\rangle$. Then choose $r\in\{0,1\}$ and $i\in\{1,2,\dots,m\}$ equiprobably. If $r=0$, Alice and Bob perform measurements $A_i$ and $B_i$ respectively and obtain the outcomes $a$ and $b$, then they return $2\alpha_{a-b \text{ mod } d}$. If $r=1$, Alice and Bob perform measurements $A_{i+1}$ and $B_i$ and obtain the outcomes $a$ and $b$, then they return $2\alpha_{b-a\text{ mod }d}$. They repeat the above process $s$ times. Denote the return values by $X_j$, $j=1,2,\dots,s$. Then it turns out that $X\equiv\frac{1}{s}\sum_{j=1}^sX_j$ is an estimation of $I_{d,m}'$.

Indeed, note that $\mathbb{E}(X_j)=I_{d,m}'$, which means $\mathbb{E}(X)=I_{d,m}'$. Furthermore, since $|\alpha_k|\le1$, by Hoeffding's inequality, if $s>8\log(1/\delta)/\epsilon^2$, we have that
\begin{equation}
    P(|X-I_{d,m}'|\ge \epsilon)\le\delta.
\end{equation}
That is to say, in order to estimate the value of $I_{d,m}'$ within additive error $\epsilon$, the cost of our protocol is $O(\log(1/\delta)/\epsilon^2)$, which is completely independent of the dimension. Then according to Theorem \ref{thm:directly}, if we want to estimate $D=D(U_1,U_2)$ within additive error $\epsilon$, then the cost of our protocol will be $O(\log(1/\delta)/D^2\epsilon^2)$ if $D>\epsilon$, or $O(\log(1/\delta)/\epsilon^4)$ if $0\le D\le\epsilon$.

As a comparison, we can consider an alternative approach to verify whether $U_1$ and $U_2$ are the same, which performs quantum process tomography (QPT) for $U_1$ and $U_2$ separately and then compare the two outputs. The standard QPT technique needs to estimate roughly $O(d^4)$ quantities. Recently, QPT protocols has been customized to characterize unitary operations~\cite{reich2013minimum,baldwin2014quantum}, which reduced the cost to $O(d^2)$. The cost of our protocol is much less than QPT and gets rid of the exponential growth with the number of qubits increasing, which means our protocol is practical in the era of large-scale quantum computation.

Lastly, we would like to stress that the measurements involved in our protocol can be physically implemented by a serial of single-qubit measurements. In fact, it is not hard to verify that the observable eigenvectors given in Eqs.\eqref{AliceEigenvector} and \eqref{BobEigenvector} can always be decomposed as tensor products of single-qubit pure states as below.
\begin{widetext}
\begin{align*}
    |a\rangle_x&=\frac{1}{\sqrt{d}}\sum_{k=0}^{d-1}\exp\left[\frac{2\pi i}{d}k(a-\alpha_x)\right]|k\rangle
    =\bigotimes_{j=1}^n\left(|0\rangle+\exp\left[\frac{2\pi i}{d}2^{j-1}(a-\alpha_x)\right]|1\rangle\right)/\sqrt{2},\\
    |b\rangle_y&=\frac{1}{\sqrt{d}}\sum_{k=0}^{d-1}\exp\left[-\frac{2\pi i}{d}k(b-\beta_y)\right]|k\rangle
    =\bigotimes_{j=1}^n\left(|0\rangle+\exp\left[-\frac{2\pi i}{d}2^{j-1}(b-\beta_y)\right]|1\rangle\right)/\sqrt{2}.
\end{align*}
\end{widetext}

As a result, to measure the original observables characterized by Eqs.\eqref{AliceEigenvector} and \eqref{BobEigenvector}, one only needs to measure the quantum system qubit by qubit, from $j=n$ to $j=1$, which can obtain the original measurement outcome bit by bit. This implies that it is realistic to implement our protocol physically.

\section{The equivalence checking of multiple quantum circuits}

Now let us go one step further. Suppose we have $k\geq3$ quantum circuits $C_1$, $C_2$, ..., $C_k$, and again we want to know whether they are equivalent to each other. Apparently, we can solve the problem by comparing these quantum circuits pair by pair. But if we are unlucky, we need to run the above two-circuit protocol for $k-1$ times. With the success in two-circuit case, we may wonder, can we design a similar protocol such that a proper $k$-partite Bell inequality allows us to solve the multi-circuit problem in one go? We show that, at least for the case that $k$ is odd, this is impossible.

Recall that a key part of our protocol is find a $k$-partite quantum state $\ket{\psi_k}$ and a certain Bell inequality such that $\ket{\psi_k}$ violates it maximally. Furthermore, $\ket{\psi_k}$ has to satisfy that condition that for any local unitary matrix $U$, it holds that $(U\otimes U\otimes \cdots \otimes U)\ket{\psi_k}=\ket{\psi_k}$.

For simplicity, we now suppose that for each party the local dimension is $2$, and the following argument is easy to be generalized to high-dimensional cases. Then we have that
\begin{equation*}
(\sigma_x\otimes \sigma_x\otimes \cdots \otimes \sigma_x)\ket{\psi_k}=\ket{\psi_k}
\end{equation*}
and
\begin{equation*}
(\sigma_z\otimes \sigma_z\otimes \cdots \otimes \sigma_z)\ket{\psi_k}=\ket{\psi_k},
\end{equation*}
where $\sigma_x$ and $\sigma_z$ are Pauli matrices. However, since $k$ is odd, $\sigma_x\otimes \sigma_x\otimes \cdots \otimes \sigma_x$ and $\sigma_z\otimes \sigma_z\otimes \cdots \otimes \sigma_z$ anticommute, which means that $\ket{\psi_k}$ is the zero vector, a contradiction.

Therefore, when $k$ is odd, we cannot generalize our two-circuit protocol to solve the equivalence checking problem in one go. However, we cannot rule out this possibility for the case that $k$ is even, where the major challenge is to find a desirable multipartite Bell inequality. We leave this for future work.

\section{Discussion}

In this paper, we have proposed a protocol for black-box equivalence checking of quantum circuits, where the key quantum property we have utilized is quantum nonlocality. We have proved the correctness of our protocol analytically and numerically. Particularly, we have shown that for any given strength of observed quantum nonlocality, the distance between two compared quantum circuits can be estimated accurately in an analytical manner. Furthermore, it turns out that the computational cost of our protocol is independent in the size of compared quantum circuits. Our work can be regarded as a nontrivial application of quantum nonlocality in the area of quantum engineering, and we hope this protocol can be applied in future quantum industries.

\bibliography{apssamp}

\begin{acknowledgments}
This work was supported by the National Key R\&D Program of China, Grants No. 2018YFA0306703, 2021YFE0113100, and the National Natural Science Foundation of China, Grant No. 61832015.
\end{acknowledgments}

\section*{Appendix A: The proof for Lemma 1}

\begin{lemma*}
Suppose $\ket{\psi}$ is a $d\times d$ quantum state orthogonal to $\ket{\Phi_d}$. Then
\begin{equation}
-m\leq I_{d,m}(\ket{\psi})\leq m(d-2).
\end{equation}
\end{lemma*}
\begin{proof}
    Recall that $A_x^1=\sum_{a=0}^{d-1}\omega^a|a\rangle_{xx}\langle a|$, and $\bar{B}_x^1=A_x^*$, where
    $$
    |a\rangle_x=\frac{1}{\sqrt{d}}\sum_{k=0}^{d-1}\exp\left[\frac{2\pi \mathbf i}{d}k(a-\alpha_x)\right]|k\rangle.
    $$
    Then it holds that
    \begin{equation*}
        \begin{split}
            A_x^1\otimes \bar{B}_x^1=\sum_{a=0}^{d-1}\sum_{b=0}^{d-1}\omega^{a-b}|a\rangle_{xx}\langle a|\otimes|(b\rangle_{xx}\langle b|)^*,
        \end{split}
    \end{equation*}
    and
    $$
    |b\rangle^*_x=\frac{1}{\sqrt{d}}\sum_{k=0}^{d-1}\exp\left[-\frac{2\pi \mathbf i}{d}k(b-\alpha_x)\right]|k\rangle.
    $$
    For a fixed $x$, let $|\psi\rangle=\sum_{a=0}^{d-1}\sum_{b=0}^{d-1}\beta_{ab,x}|a\rangle_x|b\rangle_x^*$. Then we have that
    $$
    \langle\psi|A_x^1\otimes\bar{B}_x^1|\psi\rangle=\sum_{a=0}^{d-1}\sum_{b=0}^{d-1}|\beta_{ab,x}|^2\omega^{a-b},
    $$
    and
    \begin{equation*}
        \begin{split}
            \sum_{l=1}^{d-1}\langle\psi|A_x^l\otimes\bar{B}_x^l|\psi\rangle&=\sum_{l=1}^{d-1}\sum_{a=0}^{d-1}\sum_{b=0}^{d-1}|\beta_{ab,x}|^2\omega^{l(a-b)}\\
            &=\sum_{a=0}^{d-1}|\beta_{aa,x}|^2(d-1)+\sum_{a\neq b}|\beta_{ab,x}|^2(-1)\\
            &=d\sum_{a=0}^{d-1}|\beta_{aa,x}|^2-1,
        \end{split}
    \end{equation*}
    which implies that
    $$
    \sum_{x=1}^m\sum_{l=1}^{d-1}\langle\psi|A_x^l\otimes\bar{B}_x^l|\psi\rangle=d\sum_{x=1}^m\sum_{a=0}^{d-1}|\beta_{aa,x}|^2-m.
    $$
    Then it is not hard to see that
    $$
    I_{d,m}(\ket{\psi})\ge -m.
    $$

    At the same time, we let $|\psi\rangle=\sum_{k=0}^{d-1}\sum_{j=0}^{d-1}\gamma_{kj}|k\rangle|j\rangle$. As it is orthogonal to $\ket{\Phi_d}$, we obtain that
    $$
    \sum_{k=0}^{d-1}\gamma_{kk}=0.
    $$
    Note that
    $$
    |a\rangle_x|a\rangle_x^*=\frac{1}{d}\sum_{k=0}^{d-1}\sum_{j=0}^{d-1}\exp\left[\frac{2\pi \mathbf i}{d}(k-j)(a-\alpha_x)\right]|k\rangle|j\rangle,
    $$
    thus we have
    $$
    \beta_{aa,x}=\frac{1}{d}\sum_{k=0}^{d-1}\sum_{j=0}^{d-1}\exp\left[\frac{2\pi \mathbf i}{d}(j-k)(a-\alpha_x)\right]\gamma_{kj}.
    $$

    As a result,
    \begin{widetext}
        \begin{equation*}
            \begin{split}
                &\sum_{x=1}^m\sum_{l=1}^{d-1}\langle\psi|A_x^l\otimes\bar{B}_x^l|\psi\rangle\\
                =&d\sum_{x=1}^m\sum_{a=0}^{d-1}|\beta_{aa,x}|^2-m\\
                =&\frac{1}{d}\sum_{x=1}^m\sum_{a=0}^{d-1}\left|\sum_{k=0}^{d-1}\sum_{j=0}^{d-1}\exp\left[\frac{2\pi \mathbf i}{d}(j-k)(a-\alpha_x)\right]\gamma_{kj}\right|^2-m\\
                =&\frac{1}{d}\sum_{x=1}^m\sum_{a=0}^{d-1}\left|\sum_{r=0}^{d-1}\left(\sum_{k=0}^{d-r-1}\exp\left[\frac{2\pi \mathbf i}{d}r(a-\alpha_x)\right]\gamma_{k(k+r)}+\sum_{k=d-r}^{d-1}\exp\left[\frac{2\pi \mathbf i}{d}(r-d)(a-\alpha_x)\right]\gamma_{k(k+r-d)}\right)\right|^2-m\\
                =&\frac{1}{d}\sum_{x=1}^m\sum_{a=0}^{d-1}\left|\sum_{r=0}^{d-1}\exp\left[\frac{2\pi \mathbf i}{d}r(a-\alpha_x)\right]\left(\sum_{k=0}^{d-r-1}\gamma_{k(k+r)}+\sum_{k=d-r}^{d-1}\exp\left[2\pi \mathbf i\alpha_x\right]\gamma_{k(k+r-d)}\right)\right|^2-m\\
                =&\sum_{x=1}^m\left|V_x\vec{z}_x\right|^2-m,
            \end{split}
        \end{equation*}
    \end{widetext}
    where we have defined the matrix $V_x$ and the vector $\vec{z}_x$ by setting their entries to be
    \begin{equation*}
        \begin{split}
            (\vec{z}_x)_r&=\sum_{k=0}^{d-r-1}\gamma_{k(k+r)}+\sum_{k=d-r}^{d-1}\exp\left[2\pi \mathbf i\alpha_x\right]\gamma_{k(k+r-d)},\\
            (V_x)_{a,r}&=\frac{\exp\left[\frac{2\pi \mathbf i}{d}r(a-\alpha_x)\right]}{\sqrt{d}}.
        \end{split}
    \end{equation*}

    It can be verified that $V_x$ is unitary, then we have $|V_x\vec{z}_x|=|\vec{z}_x|$. So
    \begin{equation*}
        \begin{split}
            &\sum_{x=1}^m\left|V_x\vec{z}_x\right|^2-m\\
            =&\sum_{x=1}^m\left|\vec{z}_x\right|^2-m\\
            =&\sum_{x=1}^m\sum_{r=0}^{d-1}\left|\sum_{k=0}^{d-r-1}\gamma_{k(k+r)}+\sum_{k=d-r}^{d-1}\exp\left[2\pi \mathbf i\alpha_x\right]\gamma_{k(k+r-d)}\right|^2-m.
        \end{split}
    \end{equation*}

    It can be verified that $\forall a,b,\lambda_k\in\mathbb{C}$, if $\sum_{k}\lambda_k=0$ and $|\lambda_k|=1$, we have $\sum_{k=1}^m|a+\lambda_kb|^2=m(|a|^2+|b|^2)$. Then it holds that
    \begin{align*}
            &\sum_{x=1}^m\sum_{r=0}^{d-1}\left|\sum_{k=0}^{d-r-1}\gamma_{k(k+r)}+\sum_{k=d-r}^{d-1}\exp\left[2\pi \mathbf i\alpha_x\right]\gamma_{k(k+r-d)}\right|^2-m\\
            =&m\sum_{r=0}^{d-1}\left(\left|\sum_{k=0}^{d-r-1}\gamma_{k(k+r)}\right|^2+\left|\sum_{k=d-r}^{d-1}\gamma_{k(k+r-d)}\right|^2\right)-m\\
            =&m\sum_{r=1}^{d-1}\left(\left|\sum_{k=0}^{d-r-1}\gamma_{k(k+r)}\right|^2+\left|\sum_{k=d-r}^{d-1}\gamma_{k(k+r-d)}\right|^2\right)-m\\
            \leq&m\sum_{r=1}^{d-1}\left((d-r)\sum_{k=0}^{d-r-1}\left|\gamma_{k(k+r)}\right|^2+r\sum_{k=d-r}^{d-1}\left|\gamma_{k(k+r-d)}\right|^2\right)-m\\
            \leq&m(d-1)\sum_{r=1}^{d-1}\left(\sum_{k=0}^{d-r-1}\left|\gamma_{k(k+r)}\right|^2+\sum_{k=d-r}^{d-1}\left|\gamma_{k(k+r-d)}\right|^2\right)-m\\
            \le&m(d-2).
    \end{align*}

\end{proof}

\section*{Appendix B: The Bell expression value for a random pure state}

\begin{lemma*}
    Given $0<\delta<1$. Suppose $|\psi\rangle$ is a $d\times d$ quantum state, which as a unit vector is chosen uniformly at random on the $d^2$-dimensional real unit sphere. Then with the probability of no less than $1-\delta$ it holds that
    \begin{equation}
        I_{d,m}(|\psi\rangle)\le m\sqrt{\frac{4}{3d\delta}}.
    \end{equation}
\end{lemma*}

\begin{proof}
    In the proof of Lemma \ref{lem:eigenvalues}, we have already known that if we let $|\psi\rangle=\sum_{k=0}^{d-1}\sum_{j=0}^{d-1}\gamma_{kj}|k\rangle|j\rangle$, we have
    \begin{align*}
        I_{d,m}(|\psi\rangle)=m\sum_{r=0}^{d-1}\left(\left|\sum_{k=0}^{d-r-1}\gamma_{k(k+r)}\right|^2+\left|\sum_{k=d-r}^{d-1}\gamma_{k(k+r-d)}\right|^2\right)-m.
    \end{align*}

    Let
    \begin{align*}
        g_{r1} &= \left|\sum_{k=0}^{r-1}\gamma_{k(k+d-r)}\right|^2, 1\le r\le d-1\\
        g_{r2} &= \left|\sum_{k=d-r}^{d-1}\gamma_{k(k+r-d)}\right|^2, 1\le r\le d-1\\
        g_d &= \left|\sum_{k=0}^{d-1}\gamma_{kk}\right|^2\\
        g &= \sum_{r=1}^{d-1}g_{r1}+\sum_{r=1}^{d-1}g_{r2}+g_d.
    \end{align*}

    Then
    \begin{align*}
        I_{d,m}(|\psi\rangle)=mg-m.
    \end{align*}

    Now let us figure out the expectation and variance of $g$. Due to symmetry, $E(g_{r1})=E(g_{r2})$. Then
    \begin{align*}
        E(g)&=\sum_{r=1}^{d-1}E(g_{r1})+\sum_{r=1}^{d-1}E(g_{r2})+E(g_d)\\
        &=2\sum_{r=1}^{d-1}E(g_{r1})+E(g_d).
    \end{align*}

    Due to symmetry again, $E(\gamma_{ij}\gamma_{kl})=E(-\gamma_{ij}\gamma_{kl})=0$, when $ij\neq kl$. And $E(|\gamma_{ij}|^2)=1/d^2$. Thus we have
    \begin{align*}
        E(g_{r1})&=E\left(\left|\sum_{k=0}^{r-1}\gamma_{k(k+d-r)}\right|^2\right)\\
        &=\sum_{k=0}^{r-1}E\left(\left|\gamma_{k(k+d-r)}\right|^2\right)\\
        &=\frac{r}{d^2}.
    \end{align*}

    Then
    \begin{align*}
        E(g)&=2\sum_{r=1}^{d-1}E(g_{r1})+E(g_d)\\
        &=2\sum_{r=1}^{d-1}\frac{r}{d^2}+\frac{1}{d}\\
        &=1.
    \end{align*}

    Before figuring out the variance, we need some auxiliary expectations. Denote the unit sphere in $d^2$-dimensional real space by $D$. Denote its surface area by $S_D=2\pi^{\frac{n}{2}}/\Gamma(\frac{n}{2})$. Then

    \begin{align*}
        E(|\gamma_{ij}|^4)&=\int_D |\gamma_{ij}|^4/S_D \mathrm{d}S\\
        &=\frac{1}{S_D}\int_0^{2\pi}\int_0^\pi\cdots\int_0^\pi \cos^4(\theta_1)\sin^{d^2-2}(\theta_1)\sin^{d^2-3}(\theta_2)\dots\sin(\theta_{d^2-2})\mathrm{d}\theta_1\dots \mathrm{d}\theta_{d^2-2}\mathrm{d}\theta_{d^2-1}\\
        &=\frac{3}{d^4+2d^2}.
    \end{align*}

    Similarly, we have
    \begin{align*}
        E(|\gamma_{ij}|^2|\gamma_{kl}|^2)=\frac{1}{d^4+2d^2},\text{ if }ij\neq kl.
    \end{align*}

    Note that due to symmetry, all the expectation containing odd power of $\gamma_{ij}$ is 0. Thus
    \begin{align*}
        E(g_{r1}^2)&= E\left(\left|\sum_{k=0}^{r-1}\gamma_{k(k+d-r)}\right|^4\right)\\
        &= \sum_{k=0}^{r-1}E\left(\left|\gamma_{k(k+d-r)}\right|^4\right)+3\sum_{k_1\neq k_2}E\left(\left|\gamma_{k_1(k_1+d-r)}\right|^2\left|\gamma_{k_2(k_2+d-r)}\right|^2\right)\\
        &=\frac{3r}{d^4+2d^2}+\frac{3r(r-1)}{d^4+2d^2}\\
        &=\frac{3r^2}{d^4+2d^2}.
    \end{align*}

    Similarly, we have

    \begin{align*}
        E(g_{r_11}g_{r_21})=\frac{r_1r_2}{d^4+2d^2}.
    \end{align*}

    Then
    \begin{align*}
        E(g^2)&=E\left(\left(\sum_{r=1}^{d-1}g_{r1}+\sum_{r=1}^{d-1}g_{r2}+g_d\right)^2\right)\\
        &=\frac{\left(\sum_{r=1}^{d-1}r+\sum_{r=1}^{d-1}r+d\right)^2}{d^4+2d^2}+2\frac{\sum_{r=1}^{d-1}r^2+\sum_{r=1}^{d-1}r^2+d^2}{d^4+2d^2}\\
        &=1+2\frac{(d-1)d(2d-1)}{3(d^4+2d^2)}\\
        &\le1+\frac{4}{3d}.
    \end{align*}

    Thus we conclude that
    \begin{align*}
        Var(g)=E(g^2)-E(g)^2\le\frac{4}{3d}.
    \end{align*}

    Then by chebyshev's inequality, with the probability of no less than $1-\delta$, we have $g\le1+\sqrt{\frac{4}{3d\delta}}$. That is
    \begin{align*}
        I_{d,m}(|\psi\rangle)=&mg-m\\
        \le&m\sqrt{\frac{4}{3d\delta}}.
    \end{align*}

\end{proof}

\end{document}